%% file: main.tex
\documentclass{article}
\input{macros}

\usepackage{graphicx}

\title{
Optimal Online Discrepancy Minimization in Linear Time}
\author{Ishaq Aden-Ali \thanks{Unemployed. Email: {adenali@berkeley.edu}}}
\date{June 27, 2026}

\begin{document}

\maketitle
\begin{abstract}
We provide an online algorithm with the following guarantee: for any fixed sequence of vectors $v_1,\dots,v_T \in \mb{R}^d$ with $\|v_i\|_2\le 1$, the algorithm assigns each arriving vector $v_t$ a random sign $\varepsilon_t$ such that every prefix sum $\sum_{i=1}^t \varepsilon_i v_i $
can be written as the sum of three coupled standard Gaussian vectors.
Our algorithm runs in $O(dT)$ time
and achieves the optimal prefix discrepancy bound
\[
\max_{1 \le t \le T}\left\| \sum_{i=1}^t \eps_i v_i \right\|_\infty = O\left( \sqrt{\log T} \right),
\]
with high probability.
This recovers the optimal bound of Kulkarni, Reis, and Rothvoss, whose algorithm runs in time exponential in $T$ and $d$.
The algorithm and main proof were discovered in a GPT-5.5 Pro Extended conversation prompted by the author. 
\end{abstract}

\paragraph{Statement on AI use.} The algorithm and main proof were discovered by OpenAI's GPT-5.5 Pro Extended reasoning model.
The proof appeared in a discussion that began with the author prompting the model to try to identify proof barriers to a natural family of algorithms related to the self-balancing walk of Alweiss, Liu, and Sawhney \cite{AlweissLS21}.
After a few rounds of prompting, a number of barriers emerged. 
The author then prompted the model to search for an improved algorithm that could overcome these barriers.
After a few more rounds of prompting, the model proposed the algorithm below, which is based on coupling three Gaussian fixed point walks, and provided a self-contained proof of the main theorem.
The author subsequently verified and rewrote the main proof, derived the straightforward implications to discrepancy minimization, and wrote the paper. GPT-5.5 Pro was also used to help find typographical errors and search for related work.
\section{Introduction}
We study online discrepancy minimization problems against an oblivious adversary: The adversary generates a sequence of vectors $v_1, v_2, \dots, v_T \in \mb{R}^d$ upfront and reveals them to the player one at a time.
Upon receiving $v_t$ at time step $t \in [T] \coloneqq \{1, \dots, T\}$, the player must commit to a sign $\eps_t \in \{\pm 1\}$ before receiving the next vector $v_{t+1}$.
The player's (resp.\ adversary's) goal is to minimize (resp.\ maximize) the \emph{prefix discrepancy}
\begin{equation}\label{eq:discrepancy}
\max_{t \in [T]}\left\|\sum_{i =1}^t \eps_i v_i \right\|_\infty.
\end{equation}
Spencer~\cite{Spencer77} was the first to study a more general version of this problem.
A number of papers study this online discrepancy problem with various distributional assumptions on the input~\cite{BansalS20,BansalJ0S20,BansalJMSS21,AltschulerT25}.
We focus on the worst-case setting where we are only assume $\|v_t\|_2 \le 1$. We refer to this as the \emph{online Koml\'os problem}, after the (offline) Koml\'os problem.\footnote{Koml\'os conjectured that for any collection of vectors $v_1, \dots, v_T$ with $\max_{i \in [T]} \|v_i\|_2 \le 1$, there are signs $\{\eps_i\}_{i=1}^T$ such that $\|\sum_{i=1}^T \eps_i v_i\|_\infty$ is constant.}

A sequence of works has studied the online Koml\'os problem \cite{AlweissLS21,LiuSS22,KulkarniRR24}. 
A common approach in this line of work is to design a randomized online algorithm that generates random signs $\eps_1, \eps_2, \dots, \eps_T$ such that every prefix sum $S_t = \sum_{i=1}^t \eps_i v_i$ is $K$-subgaussian, i.e., for any fixed vector $\theta \in \mb{R}^d$,
\[
 \E \left[\exp\left( \la S_t, \theta \ra \right) \right] \le \exp\left( \frac{K^2\|\theta\|_2^2}{2} \right).
\]
It then follows from standard concentration arguments that the prefix discrepancy does not exceed $O(K\sqrt{ \log T})$ with high probability.\footnote{The implied bound is often written $O(K\sqrt{\log dT})$ but a slightly more careful union bound removes the $K\sqrt{\log d}$ term.}
Alweiss, Liu, and Sawhney~\cite{AlweissLS21} introduced the elegant \emph{self-balancing walk} that keeps every prefix sum $\Theta(\sqrt{\log T})$-subgaussian, yielding a prefix discrepancy bound of $O(\log T)$. 
Their algorithm runs in linear time.
Later, Kulkarni, Reis, and Rothvoss~\cite{KulkarniRR24} were able to construct a distribution that keeps every prefix sum $C$-subgaussian for a universal constant $C$. This allowed them to obtain an $O(\sqrt{\log T})$ bound on the prefix discrepancy, and they showed that this bound is optimal.
Unfortunately, their algorithm runs in time exponential in $T$ and $d$.

An intermediate result of Liu, Sah, and Sawhney~\cite{LiuSS22} is conceptually the most relevant to our work.
These authors designed a particular \emph{Gaussian fixed point walk} in the online setting that we consider.
At time step $t \in [T]$, this process samples a random \emph{fractional} sign $\delta_t \in \{-1, 0 , 1\}$ so that the prefix sum satisfies
\[
\sum_{i=1}^t \delta_i v_i = -G_0 + G_{t},
\]
where $G_0$ and $G_{t}$ are both \emph{marginally} distributed as a mean zero Gaussian with covariance $\sigma^2 I_d$, i.e.\ $G_0 \sim \mc{N}(0, \sigma^2 I_d)$ and $G_t\sim \mc{N}(0, \sigma^2 I_d)$.\footnote{They also construct a version of this walk where $\delta_t \in \{-1, 1, 2\}$.}
More specifically, their random walk defines a Markov chain on $\mb{R}^d$ whose transition rule depends on the incoming vector $v_t$: given the current state $G_{t-1}$, the chain assigns transition probabilities to the three possible next states $\{G_{t-1}-v_t, G_{t-1}, G_{t-1}+v_t\}$ so that $\mc{N}(0,\sigma^2 I_d)$ is invariant.
The walk starts at $G_0 \sim \mc{N}(0,\sigma^2 I_d)$, so $G_t\sim \mc{N}(0,\sigma^2 I_d)$ throughout.
While it is tempting to try to improve these Markov chains so that they always output signs, the authors point out that this is not possible.
Still, they show that by setting the variance of the invariant Gaussian to $\Theta(\log T)$, then with very high probability every $\delta_t$ that is sampled is a sign.
 Since this choice of variance implies $O(\sqrt{\log T})$-subgaussian prefix sums, their algorithm achieves an $O(\log T)$ prefix discrepancy bound matching the earlier work \cite{AlweissLS21}.

The starting point of our work is a different, more ``balanced", Gaussian fixed point walk.
Unlike the walk described above, our walk never assigns too much probability mass to any of the three possible fractional signs.
This allows us to couple three copies of the Gaussian fixed point walk in a way that preserves the marginal distribution of each update $\delta_{t,j}$, while enforcing $\eps_t = \delta_{t, 1} +  \delta_{t, 2}  +  \delta_{t, 3} \in \{ \pm 1 \}$.
Since we can initialize the three walks so that $\sum_{j=1}^3 G_{0, j} = 0$, adding up their updates yields a signed walk with prefix sum
\[
S_t = \sum_{i=1}^t \eps_i v_i = G_{t, 1} + G_{t, 2} + G_{t, 3},
\]
where each $G_{t, j} \sim \mc{N}(0, I_d)$.
Thus, every prefix sum $S_t$ is $3$-subgaussian by H\"older's inequality.
We dub this algorithm the Gaussian triplet walk and describe it formally in \cref{alg:online-gaussian-coupled-coloring}.
Our main theorem is the following. 
\begin{theorem}\label{thm:3gaussians}
Fix a sequence $v_1, \dots, v_T \in \mb{R}^d$ such that $\|v_i\|_2 \le 1$ for all $i \in [T]$. 
At time step $t \in [T]$, \cref{alg:online-gaussian-coupled-coloring} receives $v_t$ and chooses a random sign $\eps_t \in \{ \pm 1\}$ such that the prefix sum can be written as
\[
\sum_{i=1}^t \eps_i v_i = G_{t,1} + G_{t,2} + G_{t,3},
\]
where $G_{t,j}$ has marginal distribution $\mc{N}(0, I_d)$ for every $j \in [3]$.
Moreover, \cref{alg:online-gaussian-coupled-coloring} runs in $O(dT)$ time.
\end{theorem}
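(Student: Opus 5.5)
The plan has two layers. The first is a one-dimensional ``balanced'' Gaussian fixed point walk. The second is a step-by-step coupling of three copies of it. Since $\mc{N}(0,I_d)$ is rotation invariant and an update $G\mapsto G+\delta v$ only moves $G$ along $u=v/\|v\|_2$, it suffices to work with the scalar $x=\la G,u\ra\sim\mc{N}(0,1)$ and step size $a=\|v\|_2\in(0,1]$. The case $a=0$ is trivial, since then any sign works.

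\textbf{Step 1 (single balanced walk).} I will construct transition probabilities $p_+(x),p_0(x),p_-(x)$ for $x\mapsto x+\delta a$, $\delta\in\{+1,0,-1\}$, that leave $\mc{N}(0,1)$ invariant. I will aim for detailed balance,
\[
\varphi(x)\,p_+(x)=\varphi(x+a)\,p_-(x+a),
\]
since reversibility immediately gives invariance. Each $p_\pm(x)$ should be computable in $O(1)$ time from $x$ and $a$.

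The crucial extra requirement is \emph{balance}. Every probability must stay in a window that makes the coupling of Step 2 feasible, uniformly in $x$ and $a$. In particular the nonzero mass $q(x)=p_+(x)+p_-(x)$ must stay bounded below. This is because $\eps_t=\sum_j\delta_{t,j}\in\{\pm1\}$ forces at least one $\delta_{t,j}\neq0$, and hence $\sum_j q(x_j)\ge1$. Plain Metropolis with proposal $\pm a$ fails this, because $q(x)\to0$ as $|x|\to\infty$.

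My first attempt is to fix $p_0\equiv 2/3$, so that $q\equiv 1/3$. Stationarity then reduces to the difference equation
\[
f(y+a)-f(y-a)=\tfrac13\bigl(\varphi(y+a)-\varphi(y)\bigr)
\]
for $f=\varphi p_+$. This has the telescoping solution
\[
f(x)=\tfrac13\sum_{k\ge1}\bigl(\varphi(x+(2k-1)a)-\varphi(x+2ka)\bigr).
\]
It then remains to check $0\le f\le \varphi/3$. If this bound fails for some $(x,a)$, I will relax to $q(x)\in[1/3,2/3]$, mixing in a Metropolis-type component to restore positivity. I expect this construction and its verification to be the main obstacle.

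\textbf{Step 2 (coupling).} At time $t$, conditional on the current states $x_1,x_2,x_3$, I need a joint law on $(\delta_1,\delta_2,\delta_3)\in\{-1,0,1\}^3$ with the prescribed marginals $(p_+(x_j),p_0(x_j),p_-(x_j))$. It must be supported on the vectors with coordinate sum $\pm1$, i.e.\ permutations of $(\pm1,0,0)$ and $(1,1,-1)$, $(-1,-1,1)$.

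If $q\equiv1/3$, this is explicit. Pick $J\in[3]$ uniformly and set $\delta_J=+1$ with probability $3p_+(x_J)$, otherwise $\delta_J=-1$, and set all other $\delta_j=0$. In the relaxed regime I will prove feasibility of the transportation-type linear program via a quantile construction on a single uniform variable. I will allocate nonzero mass so that the number of nonzero coordinates is always $1$ or $3$, and so that triples of nonzeros never share a sign. The window on $p_\pm,q$ is exactly what makes these interval arrangements fit. Sampling one such coupling costs $O(1)$, plus $O(d)$ to compute the three inner products and update the three vectors, for $O(dT)$ time in total.

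\textbf{Step 3 (marginals and initialization).} I initialize $G_{0,j}=\sqrt{3/2}\,(Z_j-\bar Z)$ with $Z_1,Z_2,Z_3$ i.i.d.\ $\mc{N}(0,I_d)$ and $\bar Z=\tfrac13\sum_j Z_j$. Then each $G_{0,j}\sim\mc{N}(0,I_d)$ and $\sum_j G_{0,j}=0$.

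The coupling preserves, conditionally on the entire joint past, the law of $\delta_{t,j}$ given $G_{t-1,j}$. Hence each walk $(G_{t,j})_t$ is marginally a Markov chain with the invariant kernel of Step 1, and by induction on $t$ it follows that $G_{t,j}\sim\mc{N}(0,I_d)$ for every $t$. Summing the three updates gives
\[
\sum_{i\le t}\eps_i v_i=\sum_j\bigl(G_{t,j}-G_{0,j}\bigr)=G_{t,1}+G_{t,2}+G_{t,3},
\]
and each $\eps_i\in\{\pm1\}$ by the support of the coupling. This proves \cref{thm:3gaussians}.
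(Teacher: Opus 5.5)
\textbf{Verdict: genuine gap.} Your outer architecture matches the paper's: an invariant three-point kernel with a balance window, a coupling supported on vectors with one or three nonzero coordinates summing to $\pm1$, a zero-sum Gaussian start, and induction on the marginals. Step~3 is fine, and your initialization has the same joint law as the paper's. But neither substantive ingredient is actually supplied.

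\textbf{Step~1.} The theorem is about \cref{alg:online-gaussian-coupled-coloring}, whose kernel is already fixed: $p^\pm_v(x)=\frac13\min\{1,e^{\mp\la x,v\ra-\|v\|_2^2/2}\}$. This is lazy Metropolis with proposals $\pm v$ of probability $\frac13$ each.

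You discarded Metropolis because ``$q(x)\to0$ as $|x|\to\infty$'', but that is false. The move toward the origin raises the Gaussian density and is always accepted. Concretely, $p^-_v(x)=\frac13$ when $\la x,v\ra\ge\|v\|_2^2/2$, and symmetrically for $p^+_v$. In the band between, $p^+_v+p^-_v=\frac23e^{-\|v\|_2^2/2}\cosh\la x,v\ra\ge\frac23e^{-1/2}>\frac13$, since $\|v\|_2\le1$. Add the trivial bound $p^\pm_v\le\frac13$ and detailed balance $\gamma(x)p^+_v(x)=\frac13\min\{\gamma(x),\gamma(x+v)\}=\gamma(x+v)p^-_v(x+v)$, and that is all of Step~1.

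By contrast, your $q\equiv\frac13$ kernel does not exist for any $a>0$:
\begin{itemize}
\item The requirement $0\le f\le\varphi/3$ forces $f\to0$ at $+\infty$. This kills the $2a$-periodic freedom in the difference equation and leaves your telescoped sum.
\item As $x\to-\infty$, that sum equals $-\frac13\sum_{m\in\mb{Z}}(-1)^m\varphi(x+ma)+o(1)$.
\item By Poisson summation the bilateral sum is a nonzero periodic function, while $\varphi(x)\to0$.
\end{itemize}
Your fallback, ``relax and mix in a Metropolis component'', is not a construction.

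\textbf{Step~2.} This is likewise only promised, and the window you state is not sufficient. You need $p_\pm\le\frac13\le p_++p_-$, not merely $q\in[\frac13,\frac23]$. For example, three walks with $p_+=\frac23$ and $p_-=0$ would force $\E[\eps_t]=2$.

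Write $a_j=p_+(x_j)$ and $b_j=p_-(x_j)$. The missing construction is as follows.
\begin{itemize}
\item The number $N$ of nonzero coordinates must lie in $\{1,3\}$. So $\E N=\sum_j(a_j+b_j)$ forces $\Pr(N=3)=\tau:=\frac12\bigl(\sum_j(a_j+b_j)-1\bigr)\in[0,\frac12]$.
\item You then need $y_j=\Pr(\delta_j=1,\,N=3)$ with $\max\{\tau-b_j,0\}\le y_j\le\min\{\tau,a_j\}$ and $\tau\le\sum_jy_j\le2\tau$. The last condition is exactly your ``triples never share a sign''.
\item Such $y$ exists by a greedy argument, using $\tau\le a_j+b_j$ and $\sum_ja_j,\sum_jb_j\ge\tau$.
\item The branch $N=3$ is realized by systematic sampling of one uniform on $[0,\tau)$ against the cumulative sums of $y_j/\tau$.
\item The leftover masses $a_j-y_j$ and $b_j-\tau+y_j$, which total $1-\tau$, are placed on $\pm e_j$.
\end{itemize}
This is \cref{lem:three-way-coupling} and \cref{alg:three-way-coupling}. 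Once it and the Metropolis facts above are in place, your Step~3 and the $O(dT)$ count go through as you wrote them.
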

The following online discrepancy bound is an immediate consequence of \cref{thm:3gaussians}.
\begin{theorem}\label{thm:prefix_discrepancy}
Fix a sequence $v_1, \dots, v_T \in \mb{R}^d$ such that $\|v_t\|_2 \le 1$ for all $t \in [T]$.
At time step $t \in [T]$, \cref{alg:online-gaussian-coupled-coloring} receives $v_t$ and chooses a random sign $\eps_t \in \{ \pm 1\}$.
For any $\delta \in (0, 1/2)$ it holds, with probability at least $1-\delta$, that
\[
 \max_{1 \le t \le T }\left\|\sum_{i=1}^t \eps_i v_i \right\|_\infty \le \sqrt{18\log \left(\frac{2T^2}{\delta}\right)}.
\]
Moreover, \cref{alg:online-gaussian-coupled-coloring} runs in $O(dT)$ time.
\end{theorem}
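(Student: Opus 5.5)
The plan is to derive \cref{thm:prefix_discrepancy} directly from \cref{thm:3gaussians}, using a union bound over all times and all coordinates together with a Gaussian tail estimate. The running time claim is inherited verbatim from \cref{thm:3gaussians}, so only the discrepancy bound needs proof.

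Fix $t \in [T]$ and a coordinate $k \in [d]$, and write $S_t = \sum_{i=1}^t \eps_i v_i = G_{t,1}+G_{t,2}+G_{t,3}$. The three Gaussians are coupled in an unknown way, so we never use their joint law; only the marginals are needed. For any $s>0$, if $|S_t(k)| > s$ then some $j \in [3]$ satisfies $|G_{t,j}(k)| > s/3$. Each $G_{t,j}(k)$ is a standard normal, so
\[
\Pr\bigl[|S_t(k)|>s\bigr] \le 3\cdot 2e^{-s^2/18}.
\]
Alternatively, one can use the H\"older/subgaussian route mentioned in the introduction. Since $\E e^{\lambda S_t(k)} \le \prod_j (\E e^{3\lambda G_{t,j}(k)})^{1/3} = e^{9\lambda^2/2}$, a Chernoff bound gives $\Pr[|S_t(k)|>s]\le 2e^{-s^2/18}$. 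This removes the factor $3$ and is the version I would use.

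The main obstacle is that a naive union bound over all $dT$ pairs $(t,k)$ produces a $\log d$ term, while the target bound $\sqrt{18\log(2T^2/\delta)}$ has no dependence on $d$. The footnote hints at the fix. Since $\|S_t\|_2 \le \sum_{i\le t}\|v_i\|_2 \le t$ deterministically, at most $t^2/s^2$ coordinates can exceed $s$ in absolute value. I would exploit this by exchanging the union bound for a first-moment count. Define
\[
N_t = \#\{k : |S_t(k)|>s\}.
\]
By the tail bound, $\E N_t \le 2d\,e^{-s^2/18}$, which still involves $d$, so counting alone does not suffice. A better idea is to weight coordinates by their mass. We have $\Pr[|S_t(k)|>s] \le \Pr[\text{some } |G_{t,j}(k)|>s/3]$, and a single coordinate exceeding $s$ forces $\sum_k S_t(k)^2\ge s^2$. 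I would therefore bound
\[
\Pr\bigl[\|S_t\|_\infty>s\bigr] \le \E\Bigl[\sum_k \tfrac{S_t(k)^2}{s^2}\mathbf 1\{|S_t(k)|>s\}\Bigr]
\]
and then use $\sum_k$ of these weights. Controlling this requires a $d$-free bound on $\sum_k v_i(k)^2\le 1$. Concretely, I would write $S_t(k)^2\le 3\sum_j G_{t,j}(k)^2$, but then the weights involve the Gaussian coordinates rather than the $v_i$, which is where the argument could break. My first attempt would instead use the deterministic bound $|S_t(k)|\le \sum_i |v_i(k)|$ to restrict the union to coordinates $k$ with $\sum_{i\le t}|v_i(k)|>s$, bounding their number by $t^2/s^2\le t^2$ via Cauchy--Schwarz.

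With this restriction, a union bound over at most $t^2$ coordinates at time $t$ and over $t\le T$ gives total failure probability at most
\[
\sum_{t\le T} t^2\cdot 2e^{-s^2/18}.
\]
The counting is where I am least sure: it needs tuning, either by bounding the count of relevant coordinates at time $t$ by $T$ rather than $t^2$, or by a more careful charging so that the total is $\le 2T^2 e^{-s^2/18}$. Assuming that works out, setting $s=\sqrt{18\log(2T^2/\delta)}$ makes this total equal to $\delta$, which completes the argument.
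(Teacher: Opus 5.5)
Your H\"older tail bound and the inheritance of the running time from \cref{thm:3gaussians} are fine. The step that removes the dependence on $d$, however, has a genuine gap, and retuning the count cannot close it.

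Suppose that at time $t$ you restrict to the deterministic set $\{k:\sum_{i\le t}|v_i(k)|>s\}$ and keep the same tail $2e^{-s^2/18}$ for every coordinate. Even keeping the $1/s^2$ factor you dropped, the failure probability is $\sum_{t\le T}(t^2/s^2)\,2e^{-s^2/18}\ge\delta T/(3s^2)$ when $s^2=18\log(2T^2/\delta)$. This exceeds $\delta$ as soon as $T>54\log(2T^2/\delta)$.

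Neither of your suggested fixes is available in this framework. Take every $v_i$ equal to $m^{-1/2}(e_1+\dots+e_m)$, with $d\ge m$ and $m$ slightly below $T^2/(4s^2)$. Then for every $t>T/2$, all $m$ coordinates satisfy $\sum_{i\le t}|v_i(k)|>s$. So the count at a single time is not bounded by $T$, and the counts summed over $t$ are $\Theta(T^3/s^2)$. This route therefore only gives a bound like $\sqrt{18\log(cT^3/\delta)}$. That is still $O(\sqrt{\log(T/\delta)})$, but it is not the stated bound.

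The missing idea is the one you gestured at with ``weight coordinates by their mass''. The weight has to be deterministic, and it has to enter the variance proxy rather than the count.

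\begin{itemize}
\item Let $P_t$ be the orthogonal projection onto $E_t=\mathrm{span}(v_1,\dots,v_t)$, which has dimension $d_t\le\min\{d,t\}$. Since $S_t\in E_t$, you have $S_t(k)=\langle S_t,P_te_k\rangle=\sum_j\langle G_{t,j},P_te_k\rangle$.
\item H\"older then gives $\E\exp(\lambda S_t(k))\le\exp(9\lambda^2\|P_te_k\|_2^2/2)$. So coordinate $k$ is $3\|P_te_k\|_2$-subgaussian, not merely $3$-subgaussian. Your decomposition $S_t(k)=\sum_jG_{t,j}(k)$ loses this because the individual $G_{t,j}$ do not lie in $E_t$.
\item The union bound over $k$ with $P_te_k\neq0$ gives $2\sum_k\exp\bigl(-s^2/(18\|P_te_k\|_2^2)\bigr)\le2\sum_k\|P_te_k\|_2^2e^{-s^2/18}=2d_te^{-s^2/18}$.
\item That step uses $e^{-x/a}\le ae^{-x}$ for $a\in(0,1]$ and $x\ge1$; here $x=s^2/18\ge\log 4$ because $\delta<1/2$. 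It also uses $\sum_k\|P_te_k\|_2^2=\mathrm{Tr}(P_t)=d_t$.
\item Summing over $t\le T$ with $d_t\le T$ gives exactly $2T^2e^{-s^2/18}$.
\end{itemize}

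In the example above, $d_t=1$ and $\|P_te_k\|_2^2=1/m$, so those coordinates contribute almost nothing.
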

Previously, the prefix discrepancy bound in \cref{thm:prefix_discrepancy} was only known to be achievable by an algorithm running in time exponential in $T$ and $d$ \cite{KulkarniRR24}.
In fact, a more general version of \cref{thm:prefix_discrepancy} holds for a large family of norms, but we choose to focus on the $\ell_\infty$ norm for simplicity; see \cite[Theorem 1.2]{KulkarniRR24} for details.
The guarantee of~\cref{thm:prefix_discrepancy} is essentially optimal given the lower bound \cite[Theorem 1.4]{KulkarniRR24}.

In the offline setting, our algorithm gives a linear-time algorithmic version of Banaszczyk's celebrated result \cite{Banaszczyk98} for the Koml\'os problem.
\begin{corollary}\label{cor:offline_komlos}
Let $A \in \mb{R}^{d \times T}$ be a matrix whose columns have $\ell_2$ norm at most $1$. 
There is a randomized algorithm that takes $A$ as input and outputs signs $\eps \in \{\pm 1\}^T$ such that, for any $\delta \in (0, 1/2)$ it holds, with probability at least $1-\delta$, that
\[
\|A\eps\|_\infty \le  \sqrt{18\log\left(\frac{2\min\{d,T\}}{\delta}\right)}.
\]
Moreover, the algorithm runs in $O(dT)$ time.
\end{corollary}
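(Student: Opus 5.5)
We derive \cref{cor:offline_komlos} from \cref{thm:prefix_discrepancy}, or more precisely from the Gaussian representation in \cref{thm:3gaussians} applied only at the final time, after first reducing to a matrix with $\min\{d,T\}$ relevant dimensions. We run \cref{alg:online-gaussian-coupled-coloring} on the columns $v_1,\dots,v_T$ of $A$ in any order and output the final signs. By \cref{thm:3gaussians}, $A\eps = S_T = G_{T,1}+G_{T,2}+G_{T,3}$ with each $G_{T,j}\sim\mc{N}(0,I_d)$. For each coordinate $k\in[d]$, $\la S_T,e_k\ra$ is a sum of three (coupled) standard Gaussians. By H\"older's inequality, $\E e^{\lambda \la S_T,e_k\ra}\le \prod_j (\E e^{3\lambda \la G_{T,j},e_k\ra})^{1/3} = e^{9\lambda^2/2}$. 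A Chernoff bound then gives $\Pr[|\la S_T,e_k\ra|>u]\le 2e^{-u^2/18}$. A union bound over the $d$ coordinates with $u=\sqrt{18\log(2d/\delta)}$ gives the bound with $d$ in place of $\min\{d,T\}$. The running time is $O(dT)$ by \cref{thm:3gaussians}.

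It remains to handle the case $T<d$, where we want $\log(2T/\delta)$ instead. Here the union bound over coordinates is too lossy, and this is the main obstacle. A first idea is to apply the algorithm in a rotated or lower-dimensional basis: the vector $A\eps$ lies in the column span of $A$, of dimension at most $T$. However, the $\ell_\infty$ norm is not rotation invariant, so projecting does not directly reduce the number of coordinates to union bound over. The approach I would try instead uses the standard rounding reduction. First, in $O(dT)$ time, we run a linear-algebraic partial coloring or iterated rounding step that fixes a fractional coloring $x\in[-1,1]^T$ with $Ax=0$, initially $x=0$. This reduction is not obviously linear time, though.

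A cleaner alternative is to note that we may simply apply the bound with $\log(2d/\delta)$ when $d\le T$. When $T<d$, we exploit that every row of $A$ is a vector $a_k\in\mb{R}^T$ and that at most $T$ rows can have $\|a_k\|_2^2\ge 1$, since $\sum_k\|a_k\|_2^2=\sum_t\|v_t\|_2^2\le T$. The key observation is that $\la S_T,e_k\ra$ has a better subgaussian constant when the row $a_k$ is small. I would try to prove that the three-Gaussian coupling in fact gives $\la G_{T,j},e_k\ra$ effectively variance proportional to $\|a_k\|_2^2$. This is false as stated, since each $G_{T,j}$ has full identity covariance. So instead I would bound $\la S_T,e_k\ra\le\|a_k\|_1$ trivially, and handle heavy rows by union bound. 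This bookkeeping is where I expect the argument to need the most care, and I would check whether the paper intends the $\min$ to arise from a preliminary dimension reduction that replaces $A$ by an equivalent instance with at most $T$ rows at the cost of constant factors, absorbed into the constant $18$.
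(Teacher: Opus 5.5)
Your argument for the case $d\le T$ is correct, but the case $T<d$ is left unproved, and that case is the whole content of the $\min\{d,T\}$ in the statement. None of the routes you sketch closes it with the stated constant. Partial coloring is not linear time, as you note. A ``preliminary dimension reduction at the cost of constant factors'' is not available, because the constant $18$ is explicit. The light/heavy split via $|\la S_T,e_k\ra|\le\|a_k\|_1$ loses a factor of $2$ in the exponent. A row with $\|a_k\|_1>u$ has $\|a_k\|_2^2>u^2/T$, so there can be nearly $T^2/u^2$ heavy rows. With $u^2=18\log(2T/\delta)$, the union bound over them only gives the estimate $2(T^2/u^2)e^{-u^2/18}=T\delta/u^2$, which exceeds $\delta$ once $T>u^2$.

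The missing idea is the projection you dismissed. You do not need to reduce the number of coordinates, only the variance proxy of each coordinate, and projection does exactly that. Let $P$ be the orthogonal projection onto the column span of $A$, of rank $r\le\min\{d,T\}$. Since $S_T=A\eps$ lies in this span, $\la S_T,e_k\ra=\la S_T,Pe_k\ra$. You may therefore apply H\"older with the test vector $\lambda Pe_k$ instead of $\lambda e_k$:
\[
\E\left[e^{\lambda\la S_T,e_k\ra}\right]\le\prod_{j=1}^3\E\left[e^{3\lambda\la G_{T,j},Pe_k\ra}\right]^{1/3}=e^{9\lambda^2\|Pe_k\|_2^2/2}.
\]
That each $G_{T,j}$ has identity covariance is irrelevant, since only the sum is tested against $Pe_k$. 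The relevant per-row quantity is the leverage score $\|Pe_k\|_2^2$, not $\|a_k\|_2^2$.

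The Chernoff bound then gives $\Pr(|\la S_T,e_k\ra|>u)\le 2e^{-u^2/(18\|Pe_k\|_2^2)}$, and probability $0$ if $Pe_k=0$. Apply the elementary inequality $e^{-x/a}\le a e^{-x}$ for $0<a\le 1$ and $x\ge 1$. It applies because $\delta<1/2$ forces $u^2/18\ge\log 4>1$, and it turns the tail bound into $2\|Pe_k\|_2^2e^{-u^2/18}$. Summing over $k$ gives $2\,\mathrm{Tr}(P)\,e^{-u^2/18}=2r\,e^{-u^2/18}\le 2\min\{d,T\}e^{-u^2/18}$. This handles both regimes at once with the stated constant, without changing the algorithm or its $O(dT)$ running time. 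It is exactly the paper's argument from the proof of \cref{thm:prefix_discrepancy}, applied at $t=T$.
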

Previous algorithmic versions of Banaszczyk's bound gave a number of polynomial-time algorithms with different runtime tradeoffs~\cite{LevyRR17, DadushGLN19,BansalDG19,BansalDGL19,BansalLV22,PesentiV23}.
Finally, we note that Banaszczyk's bound for the Koml\'os problem is suboptimal in light of the recent breakthrough by Bansal and Jiang~\cite{BansalJ26}, who achieve the bound $\widetilde{O}(\log^{1/4} T)$ with an efficient algorithm.

\section{Gaussian triplet walk}\label{sec:gaussian_triplet_walk}
In this section, we analyze the Gaussian triplet walk.
We first introduce a useful Gaussian fixed point walk.
For $x, v \in \mb{R}^d$, define the following three probabilities
\begin{align*}\label{eq:markov_chain}
\begin{split}
    p_v^+(x)
    &=\frac13\min\left\{1,
      e^{-\langle x,v\rangle-\|v\|_2^2/2}\right\},\\
    p_v^-(x)
    &=\frac13\min\left\{1,
      e^{\langle x,v\rangle-\|v\|_2^2/2}\right\},\\
    p_v^0(x)&=1-p_v^+(x)-p_v^-(x).
\end{split}
\end{align*}
Consider the random walk on $\mb{R}^d$ defined by the following time-dependent Markov chain. At time $t$, the chain transitions from state $G_{t-1}$ to one of the three states $G_{t-1} - v_t$,  $G_{t-1}$, or $ G_{t-1} + v_t$ with probabilities $p_{v_t}^-(G_{t-1})$, $p_{v_t}^0(G_{t-1})$, and $p_{v_t}^+(G_{t-1})$ respectively.
This walk preserves $\mc{N}(0, I_d)$ at every time step.
\begin{proposition}\label{prop:gaussian_markov_chain}
Fix $v \in \mb{R}^d$ and let $G \sim \mc{N}(0, I_d)$.
Let $\delta \in \{-1, 0, 1\}$ be the random variable defined by
\[
\Pr(\delta = -1  \mid G ) = p_v^-(G), \quad  \Pr(\delta = 1  \mid G) = p_v^+(G), \quad \Pr(\delta = 0  \mid G) = p_v^0(G). 
\]
Then $G + \delta v \sim \mc{N}(0, I_d)$.
\end{proposition}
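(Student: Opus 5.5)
We compute the density of $Y := G + \delta v$ directly and check that it equals the standard Gaussian density $\varphi(y) = (2\pi)^{-d/2} e^{-\|y\|_2^2/2}$. If $v = 0$ the claim is trivial, since then $Y = G$. So assume $v \neq 0$. Conditioning on $\delta$, the density of $Y$ at $y$ is
\[
f_Y(y) = \varphi(y)\,p_v^0(y) + \varphi(y - v)\,p_v^+(y - v) + \varphi(y+v)\,p_v^-(y+v).
\]
Here the first term comes from $\delta = 0$, where $G = y$. The second comes from $\delta = 1$, where $G = y - v$, and the third from $\delta = -1$, where $G = y + v$. Since $p_v^0 = 1 - p_v^+ - p_v^-$, it suffices to prove the two balance identities
\[
\varphi(y-v)\,p_v^+(y-v) = \varphi(y)\,p_v^-(y), \qquad \varphi(y+v)\,p_v^-(y+v) = \varphi(y)\,p_v^+(y).
\]
Adding them to $\varphi(y)p_v^0(y)$ then gives $f_Y(y) = \varphi(y)(p_v^0+p_v^-+p_v^+)(y) = \varphi(y)$.

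Each identity is a Metropolis-style detailed balance check. First,
\[
\varphi(y-v)/\varphi(y) = e^{\langle y,v\rangle - \|v\|_2^2/2}.
\]
Next, $\langle y - v, v\rangle + \|v\|_2^2/2 = \langle y,v\rangle - \|v\|_2^2/2$, so $p_v^+(y-v) = \tfrac13\min\{1, e^{-a}\}$ with $a := \langle y,v\rangle - \|v\|_2^2/2$. Also $p_v^-(y) = \tfrac13\min\{1, e^{a}\}$. The first identity therefore reduces to $e^{a}\min\{1,e^{-a}\} = \min\{1,e^{a}\}$, which is immediate. The second identity follows by the same computation with $v$ replaced by $-v$, because $p_{-v}^{\pm} = p_v^{\mp}$.

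We also check that the transition probabilities are valid. Each of $p_v^\pm$ lies in $[0,1/3]$, so $p_v^0 \ge 1/3 \ge 0$. The only real obstacle is bookkeeping: getting the exponents and signs right in the shift $y \mapsto y \mp v$. Nothing else is delicate.
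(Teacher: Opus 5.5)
Your proof is correct and follows essentially the same route as the paper: you write the density of $G+\delta v$ as three shifted terms and verify the two detailed-balance identities $\varphi(y-v)p_v^+(y-v)=\varphi(y)p_v^-(y)$ and $\varphi(y+v)p_v^-(y+v)=\varphi(y)p_v^+(y)$. The only cosmetic difference is that the paper rewrites $p_v^\pm$ as $\frac13\min\{1,\text{density ratio}\}$, whereas you check the identities through the explicit exponent $a$ and the symmetry $p_{-v}^{\pm}=p_v^{\mp}$.
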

\begin{proof}
Let $f$ denote the density of $G+ \delta v$ and write $\gamma$ for the density of the standard Gaussian. 
Since $G \sim \mc{N}(0, I_d)$, the density of $G+ \delta v$ at any point $x$ is
\[
\begin{aligned}
    f(x)
    &=
    \gamma(x-v)p_{v}^+(x-v)
    +
    \gamma(x+v)p_{v}^-(x+v)
    +
    \gamma(x)p_{v}^0(x).
\end{aligned}
\]
A Gaussian density-ratio calculation allows to rewrite
\[
    p_{v}^+(x)
    =
    \frac13\min\left\{1,\frac{\gamma(x+v)}{\gamma(x)}\right\},
    \qquad
    p_{v}^-(x)
    =
    \frac13\min\left\{1,\frac{\gamma(x-v)}{\gamma(x)}\right\}.
\]
Thus,
\[
    \gamma(x-v)p_{v}^+(x-v) = \frac13\min\{\gamma(x-v),\gamma(x)\} = \gamma(x)p_{v}^-(x),
\]
and similarly
\[
    \gamma(x+v)p_{v}^-(x+v) = \gamma(x)p_{v}^+(x).
\]
Substituting these two identities into the expression for $f(x)$ above yields
\[
\begin{aligned}
    f(x)=
    \gamma(x)p_{v}^-(x)
    +
    \gamma(x)p_{v}^+(x)
    +
    \gamma(x)p_{v}^0(x)  =\gamma(x),
\end{aligned}
\]
so $G + \delta v \sim\mc{N}(0,I_d)$ as desired.
\end{proof}
Our Gaussian fixed point walk does not assign too much probability mass to any single next state.  
\begin{proposition}\label{prop:nice_probs}
If $\|v\|_2\leq1$, then for every $x\in\mb{R}^d$,
\[
    p_v^+(x),p_v^-(x)\leq\frac13,
    \qquad
    p_v^+(x) + p_v^-(x) \ge \frac13.
\]
\end{proposition}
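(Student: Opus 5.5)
The upper bounds are immediate from the definitions: each of $p_v^+(x)$ and $p_v^-(x)$ equals $\frac13$ times a minimum of $1$ and a positive number. That minimum is at most $1$, so both probabilities are at most $\frac13$. This part does not even use $\|v\|_2\le 1$.

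For the lower bound, set $a=\langle x,v\rangle$ and $b=\|v\|_2^2/2\in[0,1/2]$. Then
\[
3\bigl(p_v^+(x)+p_v^-(x)\bigr)=\min\{1,e^{-a-b}\}+\min\{1,e^{a-b}\}.
\]
By the symmetry $a\mapsto -a$ we may assume $a\ge 0$. We then split into two cases according to whether $a\le b$ or $a>b$.

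\textbf{Case $a\le b$.} Here $a-b\le 0$, so the second term equals $e^{a-b}$. The first term is $e^{-a-b}$, since $-a-b\le 0$. The sum is $e^{-b}(e^{a}+e^{-a})=2e^{-b}\cosh a$. Because $\cosh a\ge 1$ and $b\le 1/2$, this is at least $2e^{-1/2}\approx 1.213\ge 1$.

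\textbf{Case $a>b$.} Here the second term equals $1$ and the first term is positive, so the sum exceeds $1$.

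In both cases the sum is at least $1$, which gives $p_v^+(x)+p_v^-(x)\ge\frac13$. The only step that needs care is the first case. It is the sole place where $\|v\|_2\le1$ is used, and it works because $2e^{-1/2}\ge1$; the remaining steps are routine case bookkeeping. We do not need to check $p_v^0(x)\ge 0$ separately, since it follows from the two upper bounds (in fact $p_v^0(x)\ge\frac13$).
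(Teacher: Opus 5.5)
Your proof is correct and follows essentially the same route as the paper: the upper bounds come straight from the definition, and the lower bound splits on whether $|\langle x,v\rangle|$ exceeds $\|v\|_2^2/2$ (then one of the two probabilities is exactly $\frac13$) or not (then the sum equals $\frac23 e^{-\|v\|_2^2/2}\cosh\langle x,v\rangle\ge\frac23 e^{-1/2}>\frac13$). Your $a\mapsto -a$ symmetry simply merges the paper's two outer cases ($z\ge r$ and $z\le -r$) into one.
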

\begin{proof}
The upper bounds follow by definition.
Set
$z =\langle x,v\rangle$ and $r =\|v\|_2^2/2\leq1/2$.  If
$z\geq r$, then $p_v^-(x)=1/3$.
Similarly, if $z\leq-r$, then
$p_v^+(x)=1/3$.  In the remaining case $z  \in (-r, r)$,
\[
    p_v^+(x)+p_v^-(x)
    =\frac23e^{-r}\cosh z
    \geq \frac23e^{-1/2}>\frac13.\qedhere
\]
\end{proof}
Next, we show that, under the balance condition, there is a coupling on $\{-1, 0, 1\}^3$ with the correct marginals whose coordinates sum to either $-1$ or $1$.
\begin{lemma}\label{lem:three-way-coupling}
Given numbers $(a_j, b_j)_{j=1}^3$ satisfying $ 0 \le a_j, b_j \le \frac{1}{3}$ and $a_j+b_j \ge \frac{1}{3}$ as input, \cref{alg:three-way-coupling} outputs a random vector $\delta \in \{-1, 0, 1\}^3$ with $\sum_{j=1}^3 \delta_j  \in \{\pm 1\}$ deterministically and  marginal distributions
\[
    \Pr\left( \delta_j = 1 \right) = a_j, \quad \Pr\left(\delta_j = -1\right) = b_j.
\]
\end{lemma}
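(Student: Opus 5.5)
We will build the coupling explicitly, in two stages: first decide which coordinates of $\delta$ are zero, then decide the signs of the nonzero ones. I expect \cref{alg:three-way-coupling} does essentially this, and the proof checks that every probability it uses lies in $[0,1]$ and that the marginals come out right. The key structural observation is the following. Three entries in $\{-1,0,1\}$ have odd sum exactly when the number of nonzero entries is odd, i.e.\ $1$ or $3$. Moreover:
\begin{itemize}
\item if exactly one entry is nonzero, the sum is automatically $\pm1$;
\item if all three are nonzero, the sum is $\pm1$ exactly when the signs are not all equal.
\end{itemize}
So it suffices to produce a distribution supported on these configurations with the prescribed marginals.

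\textbf{Stage one (zero pattern).} Set $c_j = 1-a_j-b_j = \Pr(\delta_j=0)$. The hypotheses give $c_j\in[1/3,2/3]$. Since the expected number of zeros is $2(1-q)$, where $q$ is the probability that all three entries are nonzero, we are forced to take
\[
q = 1-\tfrac12(c_1+c_2+c_3)\in[0,\tfrac12].
\]
With the remaining probability exactly one index $i$ is nonzero; we choose index $i$ with probability $r_i = \tfrac12(c_1+c_2+c_3)-c_i$. Then $\Pr(\delta_j=0)=\sum_{i\ne j} r_i = c_j$, as required. We also have $r_i\ge 0$, because $c_i\le 2/3\le c_k+c_l$ for the other two indices. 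Note the identity $r_j+q = 1-c_j = a_j+b_j$.

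\textbf{Stage two (signs).} In the all-nonzero event we draw a sign pattern from the six patterns other than $(+,+,+)$ and $(-,-,-)$, with $\Pr(\delta_j=+1\mid\text{all nonzero})=\beta_j$. In the event that only $i$ is nonzero we set $\delta_i=+1$ with probability $\alpha_i$. Matching the marginals requires
\[
\alpha_j = \frac{a_j-q\beta_j}{r_j}\in[0,1],
\]
which, using $r_j+q=a_j+b_j$, is equivalent to
\[
q\beta_j\le a_j \quad\text{and}\quad q(1-\beta_j)\le b_j .
\]
(When $r_j=0$ these hold with equality and $\alpha_j$ is irrelevant.) So each $\beta_j$ must lie in
\[
I_j=[L_j,U_j], \qquad L_j=\max\{0,1-b_j/q\},\qquad U_j=\min\{1,a_j/q\},
\]
and $I_j$ is nonempty since $a_j+b_j\ge q$.

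Separately, a vector $\beta\in[0,1]^3$ is realizable as the marginal vector of a distribution on the six mixed patterns iff $\beta_1+\beta_2+\beta_3\in[1,2]$. One sees this by the same ``one or two pluses'' argument as in stage one: construct the distribution, for instance, by mixing the point masses on the patterns with one or two pluses.

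\textbf{Main obstacle.} The crux is showing that some $\beta\in I_1\times I_2\times I_3$ has coordinate sum in $[1,2]$. Since the attainable sums form the interval $[\sum L_j,\sum U_j]$, it suffices to show $\sum_j U_j\ge 1$ and $\sum_j L_j\le 2$. For the first, note that $\sum_j(a_j+b_j)=3-\sum_j c_j = 1+2q$ and $\sum_j b_j\le 1$, so $\sum_j a_j\ge 2q\ge q$. Hence either some $U_j=1$, or
\[
\sum_j U_j=\frac{\sum_j a_j}{q}\ge 1 .
\]
The bound $\sum_j L_j\le 2$ is symmetric, with $b$ in place of $a$. The degenerate case $q=0$ needs no stage-two sampling.

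Once $\beta$ is fixed, the construction is finite, has the correct marginals, and produces $\sum_j\delta_j\in\{\pm1\}$ deterministically by the structural observation above.
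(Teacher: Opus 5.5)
Your proposal is correct and is essentially the paper's proof in different coordinates: your forced $q$ is the paper's $\tau$ and your $\beta_j$ is $x_j/\tau$, so the condition $\beta\in I_1\times I_2\times I_3$ with $\sum_j\beta_j\in[1,2]$ is exactly the paper's box-and-sum condition on $x$. Likewise, your single-nonzero probabilities $r_j\alpha_j=a_j-x_j$ and $r_j(1-\alpha_j)=b_j-\tau+x_j$ are the algorithm's $U\ge\tau$ branch, and your check $\sum_j U_j\ge 1$, $\sum_j L_j\le 2$ via $\sum_j a_j,\sum_j b_j\ge q$ is the paper's feasibility argument for $x$. The one step you only sketch is realizing $\beta$ with $\sum_j\beta_j\in[1,2]$ on the six mixed sign patterns. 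The $U<\tau$ branch does this explicitly by systematic sampling: lay intervals of lengths $\beta_j\le 1$ end to end on $[0,\sum_j\beta_j)$ and mark those hit by $U/\tau$ or $U/\tau+1$, which gives exactly one or two pluses with marginals $\beta_j$. Verifying that subroutine is what closes the sketch and ties your construction to the algorithm itself.
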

\begin{proof}
We begin by showing that~\cref{alg:three-way-coupling} can find a point $x \in [0,1]^3$ satisfying
\[
\max\{\tau-b_j,0\} \le x_j \le \min\{\tau, a_j\}, \quad \text{and} \quad \tau \le \sum_{j=1}^3 x_j \le 2 \tau.
\]
Set $s_j = a_j + b_j$.
By assumption, $1/3 \le s_j \le 2/3$, so we have $0\le\tau\le1/2$.  Similarly, 
\[
    2\tau
    =s_1+s_2+s_3-1
    \le s_j+\frac13
    \le2s_j,
\]
so $\tau\le s_j$ for every $j \in [3]$.
Let $u_j = \min\{\tau, a_j\}$ and $l_j = \max\{\tau-b_j,0\}$ be the upper and lower bounds on the $j$th coordinate $x_j$. 
Since $\tau \le s_j = a_j + b_j$, we have $l_j \le u_j$ for every $j \in [3]$.
Thus, the box $ \prod_{j=1}^3 [l_j , u_j] \subseteq [0,1]^3$ is well defined.
We now show how to greedily find a point $x$ in the box satisfying the condition
\begin{equation}
    \tau\le\sum_{j=1}^3x_j\le2\tau.
    \label{eq:x-constraints}
\end{equation}
Set $A = \sum_{j=1}^3 a_j$ and $B = \sum_{j=1}^3 b_j$. We have $A, B \le 1$ and $A+B= 2\tau + 1$, so $A, B \ge \tau$. Thus,
\[
\sum_{j=1}^3 u_j\ge\tau,
    \qquad
    \sum_{j=1}^3 l_j
    =3\tau-\sum_{j=1}^3\min\{\tau,b_j\}
    \le2\tau.
\]
Start the greedy process at $x =(l_1, l_2, l_3)$. If $\sum_{j=1}^3 l_j \ge \tau$, we are done.
Otherwise, increase the coordinates of $x$ one-by-one up to $u_j$ until the sum reaches $\tau$. This terminates at a valid point because $\sum_{j=1}^3 u_j \ge \tau$.

Let $U$  be the uniform random variable on $[0,1)$ sampled by \cref{alg:three-way-coupling}.
We first analyze the branch based on the event $E_1 = \{U < \tau\}$.
Notice that conditioned on $E_1$, $U' \coloneqq U/\tau$ is uniform on $[0, 1)$.
Set $q \in [0, 1]^3$ with $q_j = x_j / \tau$ and $Q = \sum_{j=1}^3 q_j$. By \eqref{eq:x-constraints}, 
\[
    1 \le Q \le2.
\]
Since $c_j - c_{j-1} = q_j \le 1$, the events $U' \in [c_{j-1}, c_j)$ and $U'+1 \in [c_{j-1}, c_j)$ are disjoint. Hence $Z_j\in\{0,1\}$ deterministically and $\Pr(Z_j = 1) = c_j - c_{j-1} =  q_j$.
Since the intervals $[c_{j-1},c_j)$ partition $[0,Q)$, we have
\[
    \sum_{j=1}^3 Z_j
    = \mathbf{1}\{U' \in[0,Q)\} + \mathbf{1}\{U'+1\in[0,Q)\}
    \in\{1,2\},
\]
where the inclusion follows from the fact that $1 \le Q \le 2$.
Thus, $\delta  = 2Z-\mathbf{1} \in \{\pm 1\}^3$ satisfies $\sum_j \delta_j \in \{\pm 1\}$ deterministically.
In summary, we have 
\[
\Pr(\delta_j = 1 \mid E_1) = q_j = \frac{x_j}{\tau}, \quad  \Pr(\delta_j = -1 \mid E_1) = 1 - q_j = \frac{\tau-x_j}{\tau}.
\]

Now consider the branch based on the event $E_2 = \{U  \ge \tau\}$. In this case we sample $\delta\in\{\pm e_1,\pm e_2,\pm e_3\}$
according to $\Pr(\delta=e_j\mid E_2)=\frac{a_j-x_j}{1-\tau}$ and $ \Pr(\delta=-e_j\mid E_2) =\frac{b_j-\tau+x_j}{1-\tau}$.
It follows from our choice of $x$ that these probabilities are nonnegative and that their numerators sum to $\sum_{j=1}^3 (a_j-x_j+b_j-\tau+x_j) =1-\tau$.
We also have $\sum_j \delta_j \in \{\pm 1\}$ deterministically. 

Putting both cases together, we conclude that the marginals satisfy
\[
    \Pr(\delta_j=-1) = \Pr(\delta_j=-1 \mid E_1) \Pr(E_1)+ \Pr(\delta_j=-1 \mid E_2)\Pr(E_2)
    =(\tau-x_j)+(b_j-\tau+x_j)=b_j,
\]
and
\[
    \Pr(\delta_j=1) = \Pr(\delta_j=1 \mid E_1)\Pr(E_1) + \Pr(\delta_j=1 \mid E_2) \Pr(E_2)
    =x_j+(a_j-x_j)=a_j,
\]
and that the random vector $\delta \in \{-1, 0, 1\}^3$ has coordinates that sum to either $-1$ or $1$ deterministically.
\end{proof}
\begin{algorithm}[t]
\caption{\texttt{Three-Way-Coupling}}
\label{alg:three-way-coupling}
\begin{algorithmic}[1]
\Require $(a_j,b_j)_{j=1}^3$ with
$0\le a_j,b_j\le \frac13$ and $a_j+b_j\ge\frac13$
\State $\displaystyle \tau\gets(\sum_{j=1}^3(a_j+b_j)-1)/2$
\State Choose $\displaystyle x\in \mb{R}^3$ with $\max\{\tau-b_j,0\} \le x_j\le\min\{\tau,a_j\}$ and $\displaystyle \tau\le x_1 + x_2 + x_3 \le2\tau$
\State Sample $U\sim\mathrm{Uniform}[0,1)$
\If{$U<\tau$}
    \State $c_0 \gets 0$
    \For{$j=1$ to $3$}
    \State $q_j \gets x_j/\tau, \quad c_j\gets\sum_{i=1}^jq_i$
    \State $Z_j\gets\mathbf 1\!\left\{U/\tau\in[c_{j-1},c_j)\right\} + \mathbf 1\!\left\{U/\tau + 1\in[c_{j-1},c_j)\right\}$
    \EndFor
    \State $\delta \gets (2Z_1 - 1, 2Z_2 - 1, 2Z_3 - 1)$
\Else
    \State Sample $\delta\in\{\pm e_1,\pm e_2,\pm e_3\}$ with $\Pr(\delta=e_j)=\frac{a_j-x_j}{1-\tau}$ and $
    \Pr(\delta=-e_j)=\frac{b_j-\tau+x_j}{1-\tau}$
\EndIf
\State \Return $\delta$
\Comment{$\delta_1+\delta_2+\delta_3\in\{-1,1\}$}
\end{algorithmic}
\end{algorithm}

\begin{algorithm}[H]
\caption{\texttt{Gaussian-Triplet-Walk}}\label{alg:online-gaussian-coupled-coloring}
\begin{algorithmic}[1]
\Require Online vectors $v_1,\dots,v_T\in\mb{R}^d$ with $\|v_t\|_2\le 1$
\State Sample independent $G,G'\sim \mc{N}(0,I_d)$
\State Initialize
\[
    G_{0, 1} \gets G,\qquad
    G_{0, 2} \gets -\frac12G+\frac{\sqrt3}{2}G',\qquad
    G_{0, 3} \gets -\frac12G-\frac{\sqrt3}{2}G'
\]
\State $S_0 \gets 0$ \Comment{$S_0 = \sum_{j=1}^3 G_{0, j}$}
\For{$t=1$ to $T$}
    \State Receive $v_t$
    \For{$j=1$ to $3$}
        \State $p_{t,j}^+ \gets \frac13\min\{1,\exp(-\langle G_{t-1, j},v_t\rangle-\frac{\|v_t\|_2^2}{2})\}$
        \State $p_{t,j}^- \gets \frac13\min\{1,\exp(\langle G_{t-1, j},v_t\rangle-\frac{\|v_t\|_2^2}{2})\}$
    \EndFor
    \State $(\delta_{t,1}, \delta_{t,2}, \delta_{t,3})\gets
    \texttt{Three-Way-Coupling}\big((p_{t,j}^+,p_{t,j}^-)_{j=1}^3\big)$
    \Comment{\cref{alg:three-way-coupling}}
    \For{$j=1$ to $3$}
        \State $G_{t, j} \gets G_{t-1, j}+\delta_{t, j} v_t$ \Comment{$G_{t,j}$ is marginally $ \mc{N}(0, I_d)$}
    \EndFor
    \State $\varepsilon_t \gets \delta_{t,1}+\delta_{t,2}+\delta_{t,3}$
    \Comment{$\varepsilon_t\in\{-1,1\}$}
    \State $S_t \gets S_{t-1} + \eps_t v_t$ \Comment{$S_t = \sum_{j=1}^3 G_{t, j}$}
\EndFor
\State \Return $S_T$ 
\end{algorithmic}
\end{algorithm}
We now prove the main result.
\begin{proof}[Proof of \cref{thm:3gaussians}]
Notice that by construction $\sum_{j=1}^3 G_{0, j} = 0$. 
Thus, $S_0 = \sum_{j=1}^3 G_{0, j}$, so the identity $S_t=\sum_{j=1}^3G_{t,j}$ follows deterministically from the updates.
We need to prove that \(\varepsilon_t\in\{\pm1\}\) and that each
\(G_{t,j}\) remains marginally standard Gaussian at every time step $t \in [T]$.
Since $\|v_t\|_2 \le 1$, \cref{prop:nice_probs} guarantees
\begin{equation}\label{eq:final_prob_guarantee}
 p_{t, j}^+, p_{t, j}^- \le \frac{1}{3}, \quad p_{t, j}^+ +  p_{t, j}^- \ge \frac{1}{3}, 
\end{equation}
for every $t \in [T]$ and $j \in [3]$. Thus, the inputs into \texttt{Three-Way-Coupling} satisfy the requirement of~\cref{lem:three-way-coupling}, which guarantees
\[
\eps_t = \sum_{j=1}^3 \delta_{t, j} \in \{\pm 1\},
\]
deterministically.
Next, we prove by induction over $t \in [T]$ that $G_{t, j} \sim \mc{N}(0, I_d)$ for every $j \in [3]$.
The base case $t=0$ holds by construction, since $G_{0,j}\sim\mc{N}(0,I_d)$.
Assume that the inductive hypothesis holds at time step $t-1$. 
We will show that it holds at time step $t$.
Fix $j \in [3]$. We have 
\[
G_{t, j} = G_{t-1, j} + \delta_{t, j} v_t,
\]
where $G_{t-1, j} \sim \mc{N}(0, I_d)$. Since \cref{eq:final_prob_guarantee} holds, \cref{lem:three-way-coupling} yields
\begin{equation}\label{eq:delta_distribution}
\begin{split}
&\Pr(\delta_{t, j} = -1 \mid G_{t-1, j}) = p^-_{v_t}(G_{t-1, j}), \quad
\Pr(\delta_{t, j} = 1 \mid G_{t-1, j}) = p^+_{v_t}(G_{t-1, j}),
\end{split}
\end{equation}
so \cref{prop:gaussian_markov_chain} guarantees $G_{t, j} \sim \mc{N}(0, I_d)$. Since $j \in [3]$ was arbitrary, this completes the induction.

We now address the running time. 
Sampling $G,G' \sim \mathcal N(0,I_d)$ and initializing $(G_{0,j})_{j=1}^3$ takes $O(d)$ time.
At every time step it takes $O(d)$ time to compute $\|v_t\|_2^2$ and the three inner products.
It takes $O(1)$ time to sample $(\delta_{t,1}, \delta_{t,2}, \delta_{t,3})$ using the construction in the proof of \cref{lem:three-way-coupling}.
Finally, updating the auxiliary walks $(G_{t, j})_{j=1}^3$ and the main walk $S_t$ can be done in $O(d)$ time.
Thus, the total running time is $O(dT)$.
\end{proof}
\section{Discrepancy minimization}\label{sec:discrepancy}
In this section we prove \cref{thm:prefix_discrepancy,cor:offline_komlos}.
We begin with a simple proof that the Gaussian triplet walk has small prefix discrepancy.
\begin{proof}[Proof of \cref{thm:prefix_discrepancy}]
Fix $t \in [T]$.
Let $E_t \subseteq \mb{R}^d$ be the subspace spanned by $\{v_1, \dots, v_t\}$ and let $P_t$ be the orthogonal projection matrix onto $E_t$. 
Let $d_t \le \min\{d, T\}$ be the dimension of $E_t$. 
Since $S_t \in E_t$, for any $\theta \in \mb{R}^d$ we have $\la S_t , \theta \ra = \la S_t , P_t \theta \ra$.
Thus, \cref{thm:3gaussians} and H\"older's inequality give, for any $\theta \in \mb{R}^d$,
\begin{equation}\label{eq:holder}
\E\left[\exp\left(\la S_t , \theta \ra \right) \right] = \E\left[\exp\left(\la S_t , P_t\theta \ra \right) \right] \le \prod_{j=1}^3\E\left[\exp\left(\la G_{t, j}, 3P_t\theta \ra \right) \right]^{1/3} = \exp\left(\frac{9\|P_t\theta\|_2^2}{2}\right).
\end{equation}
Set $V = \{ k : P_t e_k \not= 0\}$ and fix $u > \sqrt{18}$. 
The union bound, Markov's inequality, and \cref{eq:holder} yield
\begin{align*}
\Pr\left(\|S_t\|_\infty > u \right) &\le  \sum_{k =1}^d \inf_{\lambda > 0}\Pr\left(\exp\left(\lambda \la S_t, P_te_k \ra \right) > \exp(\lambda u)\right) + \sum_{k =1}^d \inf_{\lambda > 0}\Pr\left(\exp\left(\lambda \la S_t, -P_te_k \ra \right) > \exp(\lambda u)\right) \\
&\le 2\sum_{k \in V} \inf_{\lambda > 0} \exp\left(-\lambda u + \frac{9\lambda^2\|P_te_k\|_2^2}{2}\right)\\
&=2\sum_{k \in V} \exp\left( -\frac{u^2}{18\|P_te_k\|_2^2}\right).
\end{align*}
Furthermore,
\[
\Pr\left(\|S_t\|_\infty > u \right) \le 2 \sum_{k \in V} \|P_t e_k\|_2^2 \exp\left( - \frac{u^2}{18}\right) = 2 d_t \exp\left( - \frac{u^2}{18}\right) ,
\]
where we used the fact that $ \sum_{k = 1}^d \|P_t e_k\|_2^2 = \mathrm{Tr}(P_t) = d_t$ and the inequality $\exp(-x/a) \le a \exp(-x)$, valid for $ 0 < a \le 1$ and $ x \ge 1$.
Applying the union bound over all $t\in [T]$ gives us
\[
\Pr\left( \max_{t \in [T] } \|S_t\|_\infty > u \right) \le 2\sum_{t =1}^T d_t \exp\left( - \frac{u^2}{18}\right) \le 2T^2 \exp\left(-\frac{u^2}{18}\right). 
\]
Setting $u = \sqrt{18 \log \left( 2T^2/\delta \right) }$ yields
\[
\Pr\left( \max_{t \in [T] } \|S_t\|_\infty > \sqrt{18 \log \left( \frac{2T^2}{\delta} \right) } \right) \le \delta,
\]
as claimed.
\end{proof}
For completeness, we also include a simple proof that the final step of the Gaussian triplet walk incurs discrepancy matching Banaszczyk's bound for the Koml\'os problem.
\begin{proof}[Proof of \cref{cor:offline_komlos}]
The algorithm feeds the columns of $A = (v_1, \dots, v_T) \in \mb{R}^{d \times T}$ into \cref{alg:online-gaussian-coupled-coloring} in an online fashion.
We modify the algorithm to return $\eps = (\eps_1, \dots, \eps_T)$ instead of $S_T$.
Using the same argument as in the proof of \cref{thm:prefix_discrepancy}, we get that $A\eps = S_T$ satisfies
\[
\Pr\left( \|A\eps\|_\infty > u  \right)  \le 2 \min\{d, T\} \exp \left( -\frac{u^2}{18}\right),
\]
which holds for any $u \ge \sqrt{18}$.
By setting $u = \sqrt{18\ln(2\min\{d, T\}/\delta)}$ we get
\[
\Pr\left( \|A\eps\|_\infty > \sqrt{18 \log \left( \frac{2\min\{d, T\}}{\delta} \right) } \right) \le \delta,
\]
as claimed.
\end{proof}

\paragraph{Acknowledgments.} I thank Sidhanth Mohanty for helpful comments that improved the presentation.

\footnotesize
\bibliographystyle{alpha}
\bibliography{refs.bib}

\end{document}

%% file: macros.tex
\usepackage{amsmath,amsthm,amssymb}
\usepackage{textcomp}
\usepackage{mathtools,bbm,xspace}
\usepackage{mathrsfs}
\usepackage{nicefrac}
\usepackage[numbers]{natbib}
\usepackage[dvipsnames]{xcolor}
\usepackage{pgfplots}
\pgfplotsset{compat=1.18}
\usepackage{ifxetex}
\usepackage{dirtytalk}
\usepackage{soul}

\usepackage[pagebackref]{hyperref}
\hypersetup{
colorlinks=true,
urlcolor=blue,
linkcolor=blue,
citecolor=OliveGreen,
}

\usepackage[capitalise,nameinlink]{cleveref}
\crefname{proposition}{Proposition}{Propositions}
\crefname{lemma}{Lemma}{Lemmas}
\crefname{fact}{Fact}{Facts}
\crefname{theorem}{Theorem}{Theorems}
\crefname{corollary}{Corollary}{Corollaries}
\crefname{conjecture}{Conjecture}{Conjectures}
\crefname{claim}{Claim}{Claims}
\crefname{example}{Example}{Examples}
\crefname{problem}{Problem}{Problems}
\crefname{setting}{Setting}{Settings}
\crefname{definition}{Definition}{Definitions}
\crefname{assumption}{Assumption}{Assumptions}
\crefname{subsection}{Subsection}{Subsections}
\crefname{section}{Section}{Sections}

\usepackage{tcolorbox}
\usepackage{enumerate}
\usepackage{csquotes}

\usepackage{algorithm}
\usepackage{algpseudocode}
\usepackage{tabularx}

\usepackage{todonotes}

\usepackage{bm}
\usepackage{aliascnt}
\allowdisplaybreaks

\newtheorem{theorem}{Theorem}
\newtheorem*{theorem*}{Theorem}

\newaliascnt{proposition}{theorem}
\newtheorem{proposition}[proposition]{Proposition}
\aliascntresetthe{proposition}
\newtheorem*{proposition*}{Proposition}
\newaliascnt{lemma}{theorem}
\newtheorem{lemma}[lemma]{Lemma}
\aliascntresetthe{lemma}
\newtheorem*{lemma*}{Lemma}
\newaliascnt{corollary}{theorem}
\newtheorem{corollary}[corollary]{Corollary}
\aliascntresetthe{corollary}
\newtheorem*{corollary*}{Corollary}
\newtheorem*{conjecture*}{Conjecture}

\newtheorem*{fact*}{Fact}

\newtheorem*{exercise*}{Exercise}

\newtheorem*{hypothesis*}{Hypothesis}

\usepackage{thm-restate}
\usepackage{thm-restate}

\theoremstyle{definition}

\newtheorem{exercise-easy}[theorem]{Exercise}
\newtheorem{exercise-med}[theorem]{Exercise}
\newtheorem{exercise-hard}[theorem]{Exercise$^\star$}

\newtheorem*{claim*}{Claim}

\newtheorem*{remark*}{Remark}

\newtheorem*{observation*}{Observation}

\newcommand{\Sref}[1]{\hyperref[#1]{\S\ref*{#1}}}

\usepackage[
letterpaper,
top=1in,
bottom=1in,
left=1in,
right=1in]{geometry}

\DeclarePairedDelimiterX{\norm}[1]{\lVert}{\rVert}{#1}

\DeclarePairedDelimiterX{\abs}[1]{\lvert}{\rvert}{#1}
\DeclarePairedDelimiterX{\inp}[2]{\langle}{\rangle}{#1, #2}

\DeclarePairedDelimiterX{\infdivx}[2]{(}{)}{%
  #1\;\delimsize\|\;#2%
}

\DeclareMathOperator*{\E}{\mathbb{E}}

\let\Pr\relax
\DeclareMathOperator*{\Pr}{\mathbf{Pr}}

\newcommand{\mc}[1]{\mathcal{#1}}
\newcommand{\mb}[1]{\mathbb{#1}}

\newcommand{\eps}{\varepsilon}

\newcommand{\la}{\langle}
\newcommand{\ra}{\rangle}

\def\com{1}
\newcommand{\ishaq}[1]{
    \if\com1
        \todo[inline,color=blue!30]{\small\textbf{Ishaq:} #1}
    \else
    \fi
}

\crefformat{equation}{(#2#1#3)}